\newtheorem{theo}{Theorem}[section]
\newtheorem{defi}[theo]{Definiton}
\newtheorem{lemm}[theo]{Lemma}%
{\tiny }%
\newcommand{\argmin}{\mathop{\rm argmin}\limits}
\title{\ Geometry of Arimoto algorithm}%
\author{Shoji Toyota}%
\date{}%
\begin{document}
    \maketitle
    
\begin{abstract}
In information theory, the channel capacity, which indicates how efficient a given channel is, plays an important role.  The best-used algorithm for evaluating the channel capacity is Arimoto algorithm \cite{Arimoto}. This paper aims to reveal an information geometric structure of Arimoto algorithm.   In the process of trying to reveal an information geometric structure of Arimoto algorithm, a new algorithm that monotonically increases the Kullback-Leibler divergence is proposed, which is named “the Backward em-algorithm.'' Since the Backward em-algorithm is available in many cases where we need to increase the Kullback-Leibler divergence, it has a rich potential for application to many problems of statistics and information theory.
\end{abstract} 

\tableofcontents

\newpage
%------------------------------------------------------------------------------------------------------------------------------------------------------------------- 
    \section{Introduction}
%-------------------------------------------------------------------------------------------------------------------------------------------------------------------
    Since  C. E. Shannon proposed the notion of channel capacity \cite{Shannon}, it has played an important role in information theory. Given a channel ($\Omega_1, r(y|x), \Omega_2$), the channel capacity $C$ is defined as follows:
     \begin{eqnarray*}
    C:=\sup_{q(x)\in  S_1} I(q(x)\cdot r(y|x)),
    \end{eqnarray*}
    Here, $\Omega_1$ and $\Omega_2$ denote finite sets and $r(y|x)$ denotes a conditional probability on $\Omega_2$ for $x \in \Omega_1$. The symbol $I$ denotes the mutual information of $q(x)\cdot r(y|x)$ and $\mathcal{S}_1$ denotes the set of all probability distributions on $\Omega_1$.
       
Arimoto algorithm \cite{Arimoto} is known as the best-used algorithm for evaluating the channel capacity of a memoryless channel, where we update $q^{(t)}(x) \in \mathcal{S}_1$ in order that $I(q^{(t)}(x)\cdot r(y|x))$ increases. Although many people have proposed other algorithms (e.g., \cite{Yu}), they are essentially the same as Arimoto algorithm. It implies that Arimoto algorithm is not just an algorithm but has some generic structure. The purpose of this paper is to reveal a theoretical justification of Arimoto algorithm from the information geometric point of view. 

There exist papers whose purpose are similar to the present paper, for example \cite{Ikeda}, \cite{Kenji}, \cite{Matz} and \cite{Csisza'r}. But \cite{Ikeda} and \cite{Kenji} mention only the channel capacity but not Arimoto algorithm. Although \cite{Matz} refers to Arimoto algorithm, we think it does not sufficiently explain a theoretical justification of Arimoto algorithm from the information geometric point of view (see Section \ref{Inf.geom in 3} for more information). The paper \cite{Csisza'r} tries to interpret Arimoto algorithm by using the Kullback-Leibler divergence. But, to do so, \cite{Csisza'r} expands its domain outside of the probability simplex. Since Information Geometry is conventionally geometric structures on the probability simplex ($i.e.$, ``inside'' of the probability simplex), to reveal information geometric view of Arimoto algorithm, further studies are needed. Since our analysis is inside of the probability simplex, it can be said that we deal with more generic information geometric view than previous studies.

This paper is organized as follows. In Section \ref{terminologies of Information Geometry}, we summarize some terminologies and results of Information Geometry. In Section \ref{Section.Def}, we explain the channel capacity and Arimoto algorithm. Information geometric view of a channel capacity is
     investigated in Section \ref{Inf.geom in 3}. In Section \ref{Backward em-algorithm}, we propose an algorithm naturally induced from the information geometric view of a channel capacity addressed in Section \ref{Inf.geom in 3}, and prove that this algorithm corresponds to Arimoto algorithm. We conclude the paper with brief remarks in Section \ref{section.Concluding Remarks}. %The contents of Section 3--6 are published as \cite{Shoji}.

%-------------------------------------------------------------------------------------------------------------------------------------------------------------------
\section{\bf Information Geometry} \label{terminologies of Information Geometry}
%-------------------------------------------------------------------------------------------------------------------------------------------------------------------

In a narrow sense, Information Geometry on a finite model is a dually flat structure on the probability simplex. In this section, we summarize one of the dually flat structure used in the present paper. 

\begin{defi}

Let N be a $C^{\infty}$ manifold, $g$ be a Rieamannian metric on $N$ and $\nabla$, $\nabla^*$ be affine connections on $N$. We call the triple $( g, \nabla, \nabla^*)$ an dually structure on $N$ if
\begin{eqnarray*}
Xg(Y, Z) = g(\nabla_{X}Y, Z) + g(X, \nabla^*_{Y} Z) ~~~ (\forall X,Y,Z \in \chi(N))
\end{eqnarray*}
holds. Here, $\chi(N)$ denotes the set of all vector fields on $N$. Especially, if  $\nabla$ and $\nabla^*$ are flat, we call the triple $( g, \nabla, \nabla^*)$ a dually flat structure on N.
\end{defi}
Let $\Omega$ be a finite set. We can regard the probability simplex
\begin{eqnarray*}
\mathcal{S}  :=\{p:\Omega \rightarrow \mathbb{R}_{++} ;\sum_{x \in \Omega} p(x)=1\}, 
\end{eqnarray*}
as an  $(|\Omega|-1)$-dimensional submanifold of $\mathbb{R}^{n}$.
The Fisher metric $g$ and the m-connection $\nabla^{(m)}$ and the e-connection $\nabla^{(e)}$ on $\mathcal{S}$ are defined as follows:
\begin{eqnarray*}
g_p ( X, Y) := \sum_{\omega \in \Omega} p(\omega) ( X \log p(\omega) ) ( Y \log p(\omega)),\\
g_p ( \nabla_{X}^{(m)} Y, Z):=  g_p ( \overline{\nabla}_{X} Y, Z) + \frac{\alpha}{2} S_p (X, Y, Z),\\
g_p ( \nabla_{X}^{(e)} Y, Z):=  g_p ( \overline{\nabla}_{X} Y, Z) - \frac{\alpha}{2} S_p (X, Y, Z).
\end{eqnarray*}
Here, $ \overline{\nabla}$ denotes the Levi-Civita connection of $g$ and $S$ denotes the $(0,3)$-tensor on $S$ defined by
\begin{eqnarray*}
S_p ( X, Y, Z) := \sum_{\omega \in \Omega} p(\omega) ( X \log p(\omega) ) ( Y \log p(\omega)) ( Z \log p(\omega)).
\end{eqnarray*}
Note that the triple $(g, \nabla^{(m)}, \nabla^{(e)})$ is a dually flat structure on $\mathcal{S}$ \cite[p.35, p36, Theorem 3.1]{S. Amari and H. Nagaoka}. $\nabla^{(m)}$ and $\nabla^{(e)}$ have the global affine coordinate systems $(\eta_i)^{|\Omega|-1}_{i=1}$ and $(\theta_j)^{|\Omega|-1}_{j=1}$ defined as follows:
$$\eta_i := p(i),$$
$$\theta_j := \log{\frac{p(j)}{p(|\Omega|)}}.$$
$\nabla^{(m)}$-geodesics and $\nabla^{(e)}$-geodesics have the following interesting property.
\begin{theo}\cite[Theorem 3.8]{S. Amari and H. Nagaoka} 
Let $p, q$ and $r$ be elements of $\mathcal{S}$. Asssume that the $\nabla^{(m)}$-geodesic connecting $p$ and $q$ and $\nabla^{(e)}$-geodesic connecting $q$ and $r$ are orthogonal at q. Then,
\begin{equation}
D(p||q) + D(q||r) = D(p||r)
\end{equation}
holds. Here, $D( ~|| ~)$ denotes the Kullback Leibler divergence defined as follows:
$$ D(p_1 || p_2 ) := \sum_{ x \in \Omega} p_1 (x) \log { \frac{p_1(x)}{p_2(x)}} \qquad (p_1, p_2 \in \mathcal{S}).$$
\end{theo}

The above theorem is called the generalized Pythagorean theorem.

Next, we define $\nabla^{(m)}$-projections and $\nabla^{(e)}$-projections.

\begin{defi}
Let $K$ be a submanifold of $\mathcal{S}$ and $p \in \mathcal{S}$. We call $\hat{p}$ a $\nabla^{(m)}$- resp. $\nabla^{(e)}$- projection of $p$ onto $K$ if 
the $\nabla^{(m)}$- resp. $\nabla^{(e)}$-geodesic connecting $p$ and  $\hat{p}$ are ``orthogonal'' to $K$ (with respect to the Fisher metric $g$) at $\hat{p}$.
\end{defi}

In general, a $\nabla^{(m)}$-projection nor a $\nabla^{(e)}$-projection is unique. But, if $K$ has the following property, the projection becomes unique.
\begin{defi}
Let $K$ be a submanifold of $\mathcal{S}$. We say that $K$ is $\nabla^{(m)}$-autoparallel if, for any $X, Y \in \chi(K)$, $\nabla^{(m)}_X Y |_{p} \in T_p (K)$. Similarly, $K$ is said to be  $\nabla^{(e)}$-autoparallel if, for any $X, Y \in \chi(K)$, $\nabla^{(e)}_X Y |_{p} \in T_p (K)$. Here, $T_p (K)$ denotes the tangent space of $K$ at $p$ embedded into the tangent space $T_p (\mathcal{S})$.
\end{defi}

\begin{theo}\cite[Theorem 3.9]{S. Amari and H. Nagaoka}
Let  $M$ and $E$ be  $\nabla^{(m)}$-autoparallel and $\nabla^{(e)}$-autoparallel submanifolds in $\mathcal{S}$ respectively. Let $p \in \mathcal{S}$. Then, a necessary and sufficient condition for $\hat{p}$ to be a $\nabla^{(m)}$-projection of $p$ onto $E$ is that $\hat{p}$ satisfies
\begin{equation*} 
\hat{p } = \argmin_{ p \in E} D(\hat{p} || p).
\end{equation*}
And the $\nabla^{(m)}$-projection onto $E$ is unique if it exists.

Similarly, a necessary and sufficient condition for $\hat{p}$ to be a $\nabla^{(e)}$-projection of $p$ onto $M$ is that $\hat{p}$ satisfies
\begin{equation*} 
\hat{p} = \argmin_{ p \in M} D(p  || \hat{p} ).
\end{equation*}
And the $\nabla^{(e)}$-projection onto $M$ is unique if it exists.

\end{theo}

We often need to investigate whether or not a submanifold $M$ of $\mathcal{S}$ is  $\nabla^{(m)}$ or $\nabla^{(e)}$-autoparallel. The following theorem gives a sufficient condition for $M$ to be $\nabla^{(m)}$- and $\nabla^{(e)}$-autoparallel.

\begin{theo}\label{Theo.sufficient condition to autoparallel}
Assume that, for any $p_1, p_2 \in M$ and $ t \in (0,1)$, the element 
\begin{equation}\label{eq. autoparallel w.r.t m}
 p_3 := t p_1 + (1-t) p_2 \in \mathcal{S}
\end{equation}
belong to $M$.

Then, $M$ is $\nabla^{(m)}$- autoparallel.

Let $E$ be a submanifold of $\mathcal{S}$. Assume that, for any $p_1,p_2 \in E$ and $t \in (0,1)$, the element $p_3$ for which
$$ \log p_3 = t\log p_1 + (1-t) \log p_2 + A$$
belong to $E$. Then $E$ is $\nabla^{(e)}$-autoparallel.  Here, the constant $A$, which is independent of $\omega \in \Omega$, is defined by 
\begin{equation}
A:= \log  \left\{ \left\{ \sum_{\omega \in \Omega} \exp \left\{ t \log p_1 (\omega) + (1-t) \log p_2 (\omega) \right\} \right\}^{-1} \right\}.
\end{equation}
\end{theo}

To prove Theorem \ref{Theo.sufficient condition to autoparallel}, we need the following lemmma.

\begin{lemm}\cite[Theorem 3.7.3]{Fujiwara}
Let $m, n \in \mathbb{N}$ with $0 < m \leq n$. Let $N$ be an n-dimensional flat manifold with respect to the affine connection $\nabla$ and there exists a global affine coordinate system $(x^i)_{1 \leq i \leq n}$ of $N$. A necessary and sufficient condition for an m-dimensional submanifold $M$  of $N$ to be autoparallel is that there exists a local coordinate system $(\xi^{a})_{1 \leq a \leq m}$, an ($m \times n$)-matrix  $ A$ such that $rankA = m$ and $b \in \mathbb{R}^n$ which satisfy
$$\left(
\begin{array}{ccccc}
x^{1} \\
\vdots \\
x^{m} \\
x^{m+1}\\
\vdots \\
x^{n} 
\end{array}
\right) = 
A
\left(
\begin{array}{ccccc}
\xi^{1} \\
\vdots \\
\xi^{m} 
\end{array}\right) 
+ b.
$$
\end{lemm}\label{Lemm.aaaaaaa}
\begin{proof}[\bf {Proof of Theorem \ref{Theo.sufficient condition to autoparallel}}]
Assume that $M$ satisfies the equation (\ref{eq. autoparallel w.r.t m}). Then $M$ is convex with respect to the $\nabla^{(m)}$-affine coordinate system $(\eta_i)^{|\Omega|-1}_{i=1}$. Fix $p \in M$. Take $p_1(\neq p) \in M$. If $m \geq 2$, we can take $p_2 \in M$ such that $p_1 -p$ and $p_2 - p$ are linearly independent. Repeating this, we can take $p_1, ..., p_m \in M$ such that $p_1 - p ,...,p_m -p $ are linearly independent. Define the ``hyperplane'' (with respect to $(\eta_i)^{|\Omega|-1}_{i=1}$) $T(p,p_1,...,p_m)$ by 
$$T(p,p_1,...,p_m) := \left\{ \hat{p} \in \mathcal{S} \left|  \hat{p} = p  +  \left[ p_1-p,...,p_m-p \right]
\left(
\begin{array}{ccccc}
\xi^{1} \\
\vdots \\
\xi^{n} 
\end{array}\right) ,
~(\xi^{a} )_{a=1}^{m} \in \mathbb{R}^m \right. \right\}.$$
Since $p_1 - p ,...,p_m - p $ are linearly independent, we can see that $rank\left[ p_1 -p,...,p_m - p\right] = m$. Noting that $M$ is a submanifold of $T(p,p_1,...,p_m)$  and $dim(M) = dim(T(p,p_1,...,p_m))$,  we can see that there exists a local coordinate system $(\xi^{a})_{1 \leq a \leq m}$ of $M$ such that
$$\left(
\begin{array}{ccccc}
\eta^{1} \\
\vdots \\
\eta^{m} \\
\eta^{m+1}\\
\vdots \\
\eta^{n} 
\end{array}
\right) = 
\left[ p_1 -p,...,p_m -p\right]
\left(
\begin{array}{ccccc}
\xi^{1} \\
\vdots \\
\xi^{m} 
\end{array}\right) 
+ p
$$
holds. From Lemma \ref{Lemm.aaaaaaa}, we can see that $M$ is $\nabla^{(m)}$-autoparallel. The proof of the latter half is same as the above proof.
\end{proof}

%------------------------------------------------------------------------------------------------------------------------------------------------------------------- 
\section{\bf Channel capacity and Arimoto algorithm} \label{Section.Def}
%------------------------------------------------------------------------------------------------------------------------------------------------------------------- 

In this paper, let $\Omega_i$ ($i=1,2$) be finite sets, $\mathcal{S}_i$ be the sets of all probability distributions on $\Omega_i$. Namely, 
        \begin{eqnarray*}
    \mathcal{S}_i  :=\{p:\Omega_i \rightarrow \mathbb{R}_{++} ;\sum_{x \in \Omega_i} p(x)=1\}~~~~(i=1,2), 
    \end{eqnarray*}
    where $\mathbb{R}_{++}:=\{x\in \mathbb{R};x>0\}$. 
    Similarly, let $\mathcal{S}_3$ be the set consisting of all probability distributions on  $\Omega_1 \times \Omega_2$.
    
A memoryless channel is expressed by a system where, for an input symbol $x\in \Omega_1$, an output symbol $y \in \Omega_2$ is determined at random.
    \begin{defi}
    A channel is defined by a triple $(\Omega_1,r(y|x),\Omega_2)$ of finite sets $\Omega_1,\Omega_2$ and a map $\Omega_1 \owns x\mapsto r(\cdot |x) \in \mathcal{S}_2$  .\end{defi}
    \begin{defi}
    We call the map $I:\mathcal{S}_3 \rightarrow \mathbb{R}$ defined by 
    \begin{eqnarray}
    I(p(x,y)):=D(p(x,y)||q(x)\cdot r(y))\label{Mutual}
    \end{eqnarray}
    the mutual information. In the equation $(\ref{Mutual})$, $q(x)$ and $r(y)$ mean the marginal distributions of $p(x,y)$ on $\Omega_1$ and  $\Omega_2$ respectively.
     \end{defi}
     
     \begin{defi}\label{capacity} Given a channel $(\Omega_1,r(y|x),\Omega_2)$, the channel capacity is defined by
     \begin{eqnarray*}
     C:=\sup_{q(x)\in  S_1} I(q(x)\cdot r(y|x)).
     \end{eqnarray*}
     
     \end{defi}

Arimoto algorithm is to update from $q^{(t)}(x) \in \mathcal{S}_1$ to
  
     \begin{eqnarray}\label{eq.Arimoto update}
     q^{(t+1)}(x):=\frac{q^{(t)}(x) \exp \{D(r(y|x)||r^{(t)}(y))\}}{\sum_{x'} q^{(t)}(x') \exp \{D(r(y|x')||r^{(t)}(y))\}},
     \end{eqnarray}
     where $r^{(t)}(y)$ means the marginal distribution of $q^{(t)}(x)\cdot r(y|x)$.
     It is known that, by using this algorithm,  $I(q^{(t)}(x)\cdot r(y|x))$ monotonically increases and converges to the channel capacity \cite[Theorem 2]{Arimoto}.
    
     %-------------------------------------------------------------------------------------------------------------------------------------------------------------------      
\section{\bf Information geometric view of channel capacity in $\mathcal{S}_3$}\label{Inf.geom in 3} %-------------------------------------------------------------------------------------------------------------------------------------------------------------------
      Let us try to characterize the channel capacity from the information geometric point of view. In \cite{Ikeda} and \cite{Arimoto}, the the channel capacity in $\mathcal{S}_2$ is referred to. Let us review their outline. A probability distribution that attains the channel capacity satisfies the following interesting condition:
     \begin{theo}\cite[Lemma 1]{Arimoto} \cite[p.554--555]{Ikeda} \label{Thm.geom}
     Assume that a probability distribution $\hat{q}(x) \in {\mathcal{S}_1}$ attains the channel capacity $C$. Then $\hat{q}(x)$ satisfies the following condition:
     \begin{eqnarray}
     D(r(y|x)||r_{\hat{q}}(y))=C~~(\forall x \in \Omega_1)\label{Eq.geom1},
     \end{eqnarray}
     where $r_{\hat{q}}(y)$ denotes the marginal distribution of $\hat{q}(x)\cdot r(y|x)$ on $\Omega_2$.
     Conversely, if there exist $\hat{C}\geq 0$ and $\hat{q}\in \mathcal{S}_1$ satisfying
     \begin{eqnarray}
     D(r(y|x)||r_{\hat{q}}(y))=\hat{C}~~(\forall x \in \Omega_1)\label{Eq.geom3},     
     \end{eqnarray}
     then $\hat{C}\geq 0$ and $\hat{q}(x)$ are the channel capacity and a probability distribution that attains the channel capacity, respectively.    
     \end{theo}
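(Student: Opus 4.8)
The plan rests on one algebraic identity that I would isolate first: for every input distribution $q\in\overline{\mathcal S_1}$ and every output distribution $s\in\mathcal S_2$,
\begin{eqnarray}
\sum_{x\in\Omega_1} q(x)\,D(r(y|x)\|s(y)) \;=\; I(q(x)\cdot r(y|x)) + D(r_q(y)\|s(y)), \label{Eq.key}
\end{eqnarray}
where $r_q$ denotes the marginal of $q(x)\cdot r(y|x)$ on $\Omega_2$. This is proved by expanding every term as a sum of $q(x)r(y|x)\log(\cdot)$ and splitting $\log\frac{r(y|x)}{s(y)}=\log\frac{r(y|x)}{r_q(y)}+\log\frac{r_q(y)}{s(y)}$; the convention $0\log(0/0):=0$ makes it valid up to the boundary. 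Two consequences I would record at once: setting $s=r_q$ gives $I(q(x)\cdot r(y|x))=\sum_x q(x)D(r(y|x)\|r_q(y))$, and since $D(\cdot\|\cdot)\ge 0$, the right-hand side of \eqref{Eq.key} shows $\sum_x q(x)D(r(y|x)\|s(y))\ge I(q(x)\cdot r(y|x))$ with equality precisely when $s=r_q$.

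I would treat the converse first, since \eqref{Eq.key} makes it almost immediate. Assuming $D(r(y|x)\|r_{\hat q}(y))=\hat C$ for all $x$, I substitute $s=r_{\hat q}$ into \eqref{Eq.key}; the left-hand side collapses to $\hat C\sum_x q(x)=\hat C$ for \emph{every} $q\in\overline{\mathcal S_1}$, so $I(q(x)\cdot r(y|x))=\hat C-D(r_q\|r_{\hat q})\le\hat C$. Taking $q=\hat q$ annihilates the divergence term and gives $I(\hat q(x)\cdot r(y|x))=\hat C$, so $\hat q$ attains the supremum and $\hat C=C$.

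For the forward direction I would use that $\hat q\in\mathcal S_1$ sits in the interior of the simplex, where $I$ is smooth, so ordinary first-order stationarity (no active nonnegativity constraints) is necessary at the maximum. A direct computation, using $\partial r_q(y)/\partial q(x)=r(y|x)$, yields
\begin{eqnarray*}
\frac{\partial}{\partial q(x)}\,I(q(x)\cdot r(y|x)) \;=\; D(r(y|x)\|r_q(y)) - 1 .
\end{eqnarray*}
Imposing the Lagrange condition for the constraint $\sum_x q(x)=1$ forces $D(r(y|x)\|r_{\hat q}(y))$ to be independent of $x$; calling this common value $\hat C$ and plugging back into the special case $I=\sum_x q(x)D(r(y|x)\|r_q)$ at $q=\hat q$ gives $C=I(\hat q(x)\cdot r(y|x))=\hat C$, which is exactly \eqref{Eq.geom1}.

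The delicate part is entirely in the forward direction: I must justify that the maximizer lies in the interior so that stationarity is a genuine equality rather than a one-sided Kuhn--Tucker inequality, and carry out the differentiation cleanly. (Allowing $\hat q$ to have vanishing coordinates would only give $D(r(y|x)\|r_{\hat q})\le C$ off the support, which is why the statement is phrased for $\mathcal S_1$.) Concavity of $I$ in $q$ is a comforting backstop---it certifies that an interior stationary point is a global maximum---but it is not needed for the necessity argument above. Finally, I would emphasize the reading that the later sections presumably exploit: \eqref{Eq.key} is a Pythagorean relation for the em-structure on $\mathcal S_2$, and \eqref{Eq.geom1} says that $r_{\hat q}$ is KL-equidistant from all channel rows $r(y|x)$, i.e.\ a Chebyshev-type center.
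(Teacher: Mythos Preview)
Your proof is correct. The forward half is essentially the paper's argument: both compute $\partial I/\partial q(x)=D(r(y|x)\|r_q(y))-1$ and use the Lagrange condition on the interior simplex $\mathcal S_1$ to force the divergences $D(r(y|x)\|r_{\hat q}(y))$ to a common value, then identify that value with $C$ via $I=\sum_x q(x)D(r(y|x)\|r_q)$.

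The converse half, however, takes a genuinely different route. The paper invokes the minmax redundancy theorem (Cover--Thomas, Theorem~13.1.1), namely $C=\min_q\max_x D(r(y|x)\|r_q(y))$, to get $C\le\hat C$, and then uses $I(\hat q\cdot r(y|x))=\hat C$ for $C\ge\hat C$. You instead rely only on the elementary Pythagorean identity \eqref{Eq.key}, which immediately gives $I(q\cdot r(y|x))=\hat C-D(r_q\|r_{\hat q})\le\hat C$ for every $q$, with equality at $q=\hat q$. Your argument is self-contained and avoids importing a nontrivial external result; it is also more in keeping with the information-geometric theme of the paper, since \eqref{Eq.key} is exactly the generalized Pythagorean relation in $\mathcal S_2$. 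The paper's route, on the other hand, makes explicit the saddle-point/Chebyshev-center interpretation that motivates the ``circumcenter'' picture in $\mathcal S_2$.
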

The proof is given in Section \ref{Pr.Thm.geom} for convenience' sake. Theorem \ref{Thm.geom} tells us that, from information geometric view of $\mathcal{S}_2$, the channel capacity is a ``circumcenter'' of the polyhedron spanned by $\{r(y|x)\}^{| \Omega_1 |}_{x=1}$.      

\cite{Matz} refers to an information geometric interpretation of Arimoto algorithm in $\mathcal{S}_2$, using the result of Theorem \ref{Thm.geom}:
\begin{quotation}
Given a current guess $p^{(t)}(x)$, we should check the Kullbuck-Leibler divergences $D( r(y|j) || r_{q^{(t)}}(y) )$ and move the output distribution closer to those $r(y|x)$ for which $D( r(y|j) || r_{q^{(t)}}(y) )$ is large. This can be achived by increasing the respective weights $p^{(t)} (j) $, consistent with the recursion (\ref{eq.Arimoto update}) that increases (decreases) those input probabilities for which $\exp \{ D( r(y|j) || r_{q^{(t)}}(y) ) \}$ is above (below) the average $\sum_{x} q^{(t)}(x) \exp \{D(r(y|x)||r^{(t)}(y))\}$.
\end{quotation}
Although the explanation seems to be valid intuitively, it does not seems to succeed in  revealing the behavior of $r_{q^{(t)}} (y)$ in $\mathcal{S}_2$ as $t$ is updated accurately. Therefore, in our opinion, further researches are needed to reveal the information geometric view of Arimoto algorithm.
      
     In this section, we reconsider the information geometric view of the channel capacity in $\mathcal{S}_3$. We may be able to see some interesting structure in $\mathcal{S}_3$ which is hidden in $\mathcal{S}_2$.
      
      \ Define subsets $M$ and $E$ of $\mathcal{S}_3$ by
      \begin{eqnarray*}
      M&:=&\{q(x)\cdot r(y|x)~ |~q(x) \in \mathcal{S}_1\},
      \\E&:=&\{q(x)\cdot r(y)~ |~q(x) \in \mathcal{S}_1, r(y) \in \mathcal{S}_2\}.
      \end{eqnarray*}
      From Theorem \ref{Theo.sufficient condition to autoparallel}, we can see that  $M$ is $\nabla^{(m)}$-autoparallel and $E$ is $\nabla^{(e)}$-autoparallel.  
      
      \begin{lemm} \label{Lem.proj.to E}
      For $p (x,y) \in \mathcal{S}_3$, the $\nabla^{(m)}$-projection of $p$ onto $E$ is $q(x)\cdot r(y)$,
      where $q(x)$ and $r(y)$ are defined by
      \begin{eqnarray*}
      q(x):=\sum_{y\in \Omega_2} p(x,y),
      \\ r(y):=\sum_{x\in \Omega_1} p(x,y),
      \end{eqnarray*}
      that is, $q(x)$ and $r(y)$ are the marginal distributions of $p(x,y)$.
      \end{lemm}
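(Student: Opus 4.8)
The plan is to invoke the projection theorem of information geometry (see \cite{S. Amari and H. Nagaoka}): since $E$ is $\nabla^{(e)}$-autoparallel, the $\nabla^{(m)}$-projection of $p$ onto $E$ exists, is unique, and is precisely the point $\hat p\in E$ at which the $m$-geodesic joining $p$ and $\hat p$ meets $E$ orthogonally (equivalently, the unique minimizer of $D(p\,\|\,\cdot)$ over $E$). Hence it suffices to verify that $\hat p(x,y):=q(x)r(y)$, with $q,r$ the marginals of $p$, satisfies this orthogonality condition.

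First I would describe the two relevant tangent spaces at a product distribution $\hat p=\hat q(x)\hat r(y)\in E$. In the $e$-representation, the tangent space $T_{\hat p}E$ is spanned by the functions depending on $x$ alone together with those depending on $y$ alone; this is exactly the content of $\log(\hat q(x)\hat r(y))=\log\hat q(x)+\log\hat r(y)$ and the fact that the two factors vary independently. The $m$-geodesic from $p$ to $\hat p$ has tangent vector represented, in the $m$-representation, by the signed measure $\hat p-p$. The duality pairing of an $e$-vector $f$ with this $m$-vector is $\sum_{x,y}\bigl(\hat p(x,y)-p(x,y)\bigr)f(x,y)$, so orthogonality of the $m$-geodesic to $E$ at $\hat p$ amounts to the vanishing of this pairing for every $f=a(x)$ and every $f=b(y)$.

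Next I would evaluate these two families of conditions. Taking $f=a(x)$ gives $\sum_x a(x)\bigl(\hat q(x)-q(x)\bigr)=0$ for all $a$, hence $\hat q=q$; taking $f=b(y)$ gives $\hat r=r$ in the same way. Thus orthogonality forces the marginals of $\hat p$ to coincide with those of $p$, and since $\hat p\in E$ is a product, this pins down $\hat p=q(x)r(y)$; uniqueness is inherited from the projection theorem for $\nabla^{(e)}$-autoparallel submanifolds. The main obstacle is conceptual rather than computational: one must correctly identify the $e$-tangent space of $E$ (and observe that the constant direction pairs to zero automatically because $\hat p-p$ has total mass zero) and set up the orthogonality pairing, after which the reduction to marginal matching is immediate. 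As an independent check, one can instead minimize $D(p\,\|\,a(x)b(y))=-H(p)-\sum_x q(x)\log a(x)-\sum_y r(y)\log b(y)$ directly and apply Gibbs' inequality to each factor, again recovering $a=q$ and $b=r$.
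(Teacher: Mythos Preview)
Your argument is correct, but it takes a different route from the paper. The paper's proof is purely divergence-based: for an arbitrary $\hat q\cdot\hat r\in E$ it computes
\[
D(p\,\|\,\hat q\cdot\hat r)-D(p\,\|\,q\cdot r)=D(q\,\|\,\hat q)+D(r\,\|\,\hat r)\ge 0,
\]
with equality iff $\hat q\cdot\hat r=q\cdot r$, and then invokes the characterization of the $\nabla^{(m)}$-projection as the minimizer of $D(p\,\|\,\cdot)$ over $E$. Your main argument instead appeals to the orthogonality formulation of the projection theorem, identifies $T_{\hat p}E$ in the $e$-representation as functions of the form $a(x)+b(y)$, and reduces orthogonality of the $m$-geodesic to matching of the marginals. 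Both are valid; the paper's computation is more elementary and yields the explicit Pythagorean-type decomposition as a byproduct, while your approach makes the geometry explicit and would transfer verbatim to any $\nabla^{(e)}$-autoparallel submanifold once its $e$-tangent space is described. Incidentally, the ``independent check'' you sketch at the end---minimizing $D(p\,\|\,a\cdot b)$ factorwise via Gibbs' inequality---\emph{is} essentially the paper's proof.
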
 
      The proof is given in Section \ref{Pr.Lem.proj.to E}. By utilizing Lemma \ref{Lem.proj.to E}, the channel capacity $C$ is expressed as follows:
      \begin{eqnarray}
      C=\sup_{p(x,y) \in M} D(p(x,y)||\Pi^{(m)}(p(x,y)))\label{geom. channnel} ,
      \end{eqnarray}
      where $\Pi^{(m)}(p(x,y))$ means the $\nabla^{(m)}$-projection of $p(x,y)$ onto $E$. The formula (\ref{geom. channnel}) says that, from the viewpoint of geometry in $\mathcal{S}_3$, the channel capacity C is the longest ``distance'' (between $p(x,y)$ and $\Pi^{(m)} (p(x,y))$) from $M$ to $E$ (Fig. \ref{Fig.Comparison}).
     
%-------------------------------------------------------------------------------------------------------------------------------------------------------------------
      \section{\bf Backward em-algorithm}\label{Backward em-algorithm}%-------------------------------------------------------------------------------------------------------------------------------------------------------------------
    
     In Section \ref{Inf.geom in 3}, we reveal an information geometric structure of the channel capacity in $\mathcal{S}_3$. Therefore, if we can make an algorithm monotonically increasing the Kullback-Leibler divergence, we can expect that this algorithm is useful for evaluating the channel capacity.

     An algorithm, which monotonically decreases the Kullback-Leibler divergence, is well known as ``the em-algorithm'' \cite{S.-I. Amari}. Then how can we increase the Kullback-Leibler divergence? It will be a strong candidate to project onto a $\nabla^{(m)}$($\nabla^{(e)}$)-autoparallel submanifold  by a $\nabla^{(m)}$($\nabla^{(e)}$)-geodesic. But since this projection is a critical point of the Kullback-Leibler divergence, this may sometimes decrease the Kullback-Leibler divergence. Hence, an algorithm that uses this idea is not necessarily a steady algorithm that increases the Kullback-Leibler divergence and converges to the channel capacity $C$.
     
     To overcome this difficulty, let us try to use the idea that rewinds the em-algorithm, same as rewinding movie films!
     \begin{defi}
    Define $\mathcal{S}_3$, $M$ and $E$ in the same way as Section 3. For $q^{(t)}(x)\cdot r(y|x)=:p^{(t)}(x,y) \in M$, update  $q^{(t+1)}(x)\cdot r(y|x)=:p^{(t+1)}(x,y) \in M$ as follows:
     \begin{description}
     \item[\rm 1. Backward e-step.] Search $q_{(t+1)}(x)\cdot r_{(t+1)}(y) \in E$ such that the unique $\nabla^{(e)}$-projection from $q_{(t+1)}(x)\cdot r_{(t+1)}(y)$ onto M is $p^{(t)}(x,y)$. 
     \item[\rm 2. Backward m-step.] Search $q^{(t+1)}(x)\cdot r(y|x) \in M$ such that the unique $\nabla^{(m)}$-projection from $q^{(t+1)}(x)\cdot r(y|x)$ onto E is $q_{(t+1)}(x)\cdot r_{(t+1)}(y)$.
     \end{description}
      We call this algorithm ``the Backward em-algorithm'' (See Fig.\ref{Fig.Comparison}).
     \end{defi}

\begin{figure} %---------------------------------------------------------------------------------
	\begin{centering}
	\includegraphics[scale=0.6]{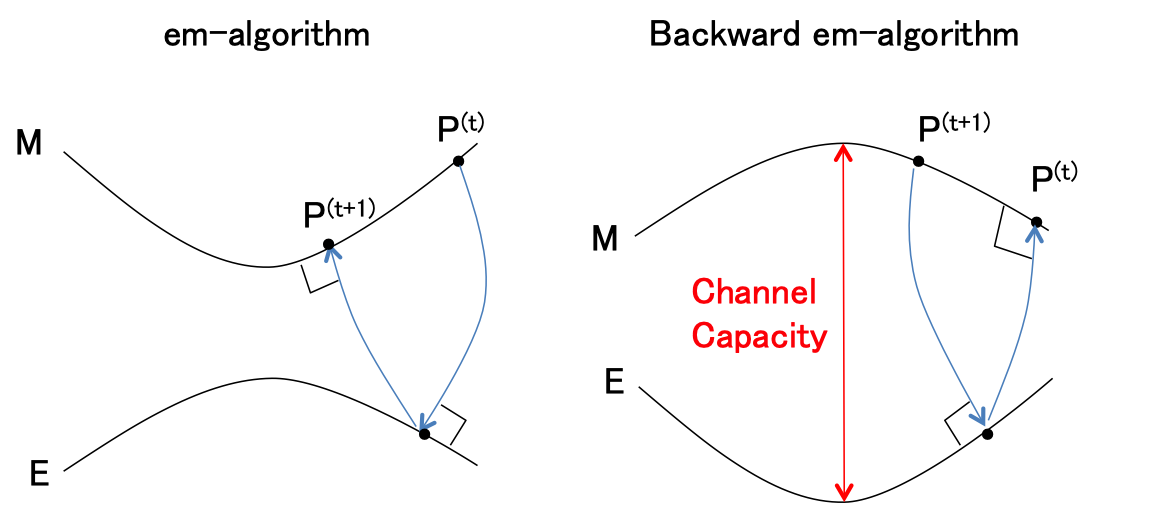}
	\par
	\caption{Comparison between em-algorithm and Backward em-algorithm}\label{Fig.Comparison}
	\end{centering}
	
\end{figure}%---------------------------------------------------------------------------------

     \begin{theo}
     By using the Backward em-algorithm, $I(p^{(t)}(x,y))$ increases as $p^{(t)}(x,y)$ is updated. Namely, the following equality 
     \begin{eqnarray*}
     I(p^{(t)}(x,y))\leq I(p^{(t+1)}(x,y))
          \end{eqnarray*}
     holds.

     \end{theo}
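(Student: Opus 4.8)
The plan is to translate the two backward steps into the generalized Pythagorean theorem and then read off the inequality. First I would record the reformulation of the objective furnished by Lemma \ref{Lem.proj.to E}: since $\Pi^{(m)}(p)$ is exactly the product of the marginals of $p$, the definition of mutual information gives $I(p)=D(p\,||\,\Pi^{(m)}(p))$ for every $p\in\mathcal{S}_3$, in particular for $p^{(t)}$ and $p^{(t+1)}$. Writing $\eta:=q_{(t+1)}(x)\cdot r_{(t+1)}(y)\in E$ for the point produced by the backward e-step, the backward m-step says precisely that $\Pi^{(m)}(p^{(t+1)})=\eta$. Hence $I(p^{(t+1)})=D(p^{(t+1)}\,||\,\eta)$, and the whole problem reduces to comparing $D(p^{(t+1)}\,||\,\eta)$ with $I(p^{(t)})=D(p^{(t)}\,||\,\Pi^{(m)}(p^{(t)}))$.

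Second, I would exploit the backward e-step. By definition $p^{(t)}$ is the $\nabla^{(e)}$-projection of $\eta$ onto $M$, so the $\nabla^{(e)}$-geodesic joining $\eta$ and $p^{(t)}$ meets $M$ orthogonally at $p^{(t)}$. Because $M$ is $\nabla^{(m)}$-autoparallel, the $\nabla^{(m)}$-geodesic joining $p^{(t)}$ and $p^{(t+1)}$ stays inside $M$ and is therefore orthogonal to that $\nabla^{(e)}$-geodesic at $p^{(t)}$. The generalized Pythagorean theorem, applied to the triple $p^{(t+1)},\,p^{(t)},\,\eta$, then yields
\[ D(p^{(t+1)}\,||\,\eta)=D(p^{(t+1)}\,||\,p^{(t)})+D(p^{(t)}\,||\,\eta). \]

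Finally I would bound the two terms on the right. Since $\Pi^{(m)}(p^{(t)})$ is the $\nabla^{(m)}$-projection of $p^{(t)}$ onto the $\nabla^{(e)}$-autoparallel set $E$, it minimizes $D(p^{(t)}\,||\,\cdot)$ over $E$; as $\eta\in E$, this gives $D(p^{(t)}\,||\,\eta)\ge D(p^{(t)}\,||\,\Pi^{(m)}(p^{(t)}))=I(p^{(t)})$. Combining this with $D(p^{(t+1)}\,||\,p^{(t)})\ge0$ and the displayed identity, I obtain $I(p^{(t+1)})=D(p^{(t+1)}\,||\,p^{(t)})+D(p^{(t)}\,||\,\eta)\ge I(p^{(t)})$, which is the claim. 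The step I expect to be the main obstacle is the careful bookkeeping in the Pythagorean identity: one must verify that the orthogonality granted by the backward e-step is exactly the hypothesis of the Pythagorean theorem for the triple in the order $p^{(t+1)},p^{(t)},\eta$, and that the arguments of $D$ sit on the correct sides, since both the Pythagorean decomposition and the projection-minimization inequality are sensitive to the orientation of the divergence. A secondary point worth settling is that the two backward steps are genuinely well defined (existence of $\eta$ and of $p^{(t+1)}$); for the monotonicity statement, however, these points may be taken as produced by the algorithm.
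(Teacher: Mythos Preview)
Your argument is correct and coincides with the paper's proof: both establish $I(p^{(t+1)})=D(p^{(t+1)}\|\eta)$ from the backward m-step, apply the generalized Pythagorean theorem to the triple $p^{(t+1)},p^{(t)},\eta$ (using that $p^{(t)}=\Pi^{(e)}\eta$ and that $M$ is $\nabla^{(m)}$-autoparallel), and then compare $D(p^{(t)}\|\eta)$ with $I(p^{(t)})$. The only cosmetic difference is that the paper obtains this last comparison via a second Pythagorean identity $D(p^{(t)}\|\eta)=D(p^{(t)}\|\Pi^{(m)}p^{(t)})+D(\Pi^{(m)}p^{(t)}\|\eta)$ together with nonnegativity of $D$, whereas you invoke the equivalent minimizing property of the $\nabla^{(m)}$-projection onto the $\nabla^{(e)}$-autoparallel set $E$.
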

     \begin{proof}
     \begin{eqnarray*}
     I(p^{(t)}(x,y))&=&D(p^{(t)}(x,y)||\Pi^{(m)}({p^{(t)}}(x,y))) \\
                            &\leq& D(p^{(t)}(x,y)||(\Pi^{(m)}{p^{(t)}}(x,y))) +D(\Pi^{(m)}({p^{(t)}}(x,y))||q_{(t+1)}(x)\cdot r_{(t+1)}(y))\\
                            &=& D(p^{(t)}(x,y)||q_{(t+1)}(x)\cdot r_{(t+1)}(y))\\
                            &\leq& D(p^{(t+1)}(x,y)||p^{(t)}(x,y)) +D(p^{(t)}(x,y)||q_{(t+1)}(x)\cdot r_{(t+1)}(y))\\
                            &=& D(p^{(t+1)}(x,y)||q_{(t+1)}(x)\cdot r_{(t+1)}(x))\\
                            &=& I(p^{(t+1)}(x,y)) .                      
      \end{eqnarray*}            
      Note that the second and third equalities follow from the generalized Pythagorean theorem.     
     \end{proof}

     Although we define the Backward em-algorithm, we can not determine whether or not there exist any probability distributions $q_{(t+1)} (x) \cdot r_{(t+1)}(y)$ which satisfy $\Pi^{(e)}( q_{(t+1)}(x) \cdot r_{(t+1)}(y)  ) = p^{(t)}(x,y)$ for a given probability distribution $p^{(t)}(x,y) \in M$. Therefore it is not trivial that we can carry out the Backward e-step. For $p^{(t)}(x,y) \in M$, do there exist any probability distributions $q_{(t+1)}(x) \cdot r_{(t+1)}(y) \in E$ which satisfy $\Pi^{(e)}( q_{(t+1)}(x) \cdot r_{(t+1)}(y)  ) = p^{(t)}(x,y)$ ? And if any, can we write $q_{(t+1)}(x) \cdot r_{(t+1)}(y)$ explicitly? The following theorem answers positively to the above two questions.
       
       \begin{theo} \label{Thm.Backward e-step}
     Let $q^{(t)}(x)\cdot r(y|x) \in M$. Then the following two statements for $q(x) \in \mathcal{S}_1$ and $r(y) \in \mathcal{S}_2$ are equivalent:
     \begin{description}
     \item{1}. $q(x)\cdot r(y) \in E$ satisfies 
     \begin{eqnarray}
     \Pi^{(e)}(q(x)\cdot  r(y)) =q^{(t)}(x)\cdot r(y|x)\label{Statement1}
     \end{eqnarray}
     where $\Pi^{(e)}(q(x)\cdot  r(y))$ denotes the $\nabla^{(e)}$-projection from $q(x)\cdot  r(y)$ onto $M$.
     \item{2}.
     $q(x)\cdot  r(y) \in E$ satisfies
     \begin{eqnarray}
     q(x) \propto q^{(t)}(x)\exp {D(r(y|x)||r(y))}\label{Statement2}.
     \end{eqnarray}
     \end{description}
 \end{theo}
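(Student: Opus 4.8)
The plan is to reduce Statement~1 to an explicit minimization over $M$ and then solve that minimization. Since $M$ is $\nabla^{(m)}$-autoparallel, the $\nabla^{(e)}$-projection onto $M$ exists and is unique, and it admits the variational characterization
\[
\Pi^{(e)}(p)=\argmin_{\hat p\in M} D(\hat p\,||\,p)\qquad(p\in E).
\]
To see this, write $\hat p:=\Pi^{(e)}(p)$ and let $\tilde p\in M$ be arbitrary; the $\nabla^{(m)}$-geodesic from $\tilde p$ to $\hat p$ stays in $M$ by autoparallelism, while the $\nabla^{(e)}$-geodesic from $\hat p$ to $p$ meets $M$ orthogonally at $\hat p$ by the definition of the $\nabla^{(e)}$-projection. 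The generalized Pythagorean theorem \cite[Theorem 3.8]{S. Amari and H. Nagaoka} then gives $D(\tilde p\,||\,p)=D(\tilde p\,||\,\hat p)+D(\hat p\,||\,p)\ge D(\hat p\,||\,p)$, with equality iff $\tilde p=\hat p$. Hence Statement~1 is equivalent to saying that $q^{(t)}(x)\cdot r(y|x)$ is the (unique) minimizer of $D(\,\cdot\,||\,q(x)\cdot r(y))$ over $M$.

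First I would parametrize $M$ by its input marginal $\hat q\in\mathcal{S}_1$ and expand the objective. Using $\sum_{y}r(y|x)=1$ to collapse the sums over the marginals, a direct computation gives the decomposition
\[
D(\hat q(x)\cdot r(y|x)\,||\,q(x)\cdot r(y))=D(\hat q\,||\,q)+\sum_{x\in\Omega_1}\hat q(x)\,D(r(y|x)\,||\,r(y)).
\]
This is the one routine calculation, and it is the place where the product structure of both $M$ and $E$ enters.

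Next I would minimize the right-hand side over $\hat q\in\mathcal{S}_1$. Writing $d(x):=D(r(y|x)\,||\,r(y))$, which does not depend on $\hat q$, the map $\hat q\mapsto D(\hat q\,||\,q)+\sum_{x}\hat q(x)\,d(x)$ is strictly convex, so a Lagrange-multiplier computation with the constraint $\sum_x\hat q(x)=1$ produces the unique minimizer
\[
\hat q(x)\propto q(x)\exp\{-D(r(y|x)\,||\,r(y))\}.
\]
By uniqueness of the projection, Statement~1 holds precisely when $\hat q=q^{(t)}$, i.e.\ $q^{(t)}(x)\propto q(x)\exp\{-D(r(y|x)\,||\,r(y))\}$; rearranging this single proportionality yields $q(x)\propto q^{(t)}(x)\exp D(r(y|x)\,||\,r(y))$, which is Statement~2. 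Reading the argument in reverse gives the converse implication, so the two statements are equivalent.

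The hard part will not be any individual computation but fixing the correct orientation of the divergence in the variational description of $\Pi^{(e)}$: the $\nabla^{(e)}$-projection onto the $\nabla^{(m)}$-autoparallel set $M$ minimizes $D(\hat p\,||\,p)$ rather than $D(p\,||\,\hat p)$. Getting this backwards would flip the sign of the exponent in the minimizer and destroy the match with Statement~2, so I would take care to apply the Pythagorean theorem with the three points in the order $\tilde p,\hat p,p$ as above. Once the orientation is settled, the decomposition and the tilting computation are straightforward.
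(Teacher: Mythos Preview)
Your proposal is correct and follows essentially the same route as the paper: both reduce Statement~1 to the variational characterization $\Pi^{(e)}(q\cdot r)=\argmin_{\hat q\cdot r(y|x)\in M}D(\hat q\cdot r(y|x)\,\|\,q\cdot r)$, invoke the identical decomposition $D(\hat q\cdot r(y|x)\,\|\,q\cdot r)=D(\hat q\,\|\,q)+\sum_x \hat q(x)D(r(y|x)\,\|\,r(y))$, and solve by a Lagrange-multiplier computation. The only cosmetic differences are that you justify the variational characterization via the Pythagorean theorem and first compute the general minimizer before equating it with $q^{(t)}$, whereas the paper writes the stationarity condition directly at $q^{(t)}$; neither changes the substance of the argument.
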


      \begin{proof}
      Fix $q(x)\cdot r(y)$ contained in $E$. Define $L: {\mathbb{R}^{n}_{++}}\times \mathbb{R} \rightarrow \mathbb{R}$ by 
      \begin{eqnarray*}
                  L(\hat{q}(1), ..., \hat{q}(m), \lambda):= D(\hat{q}(x)\cdot r(y|x)||q(x)\cdot r(y))+\lambda (1-\sum_{x=1}^{m} \hat{q}(x)).
      \end{eqnarray*}
      Noting that 
      \begin{eqnarray*}   
      (\ref{Statement1})  \Leftrightarrow \argmin_{\hat{q}(x)\cdot r(y|x) \in M} D(\hat{q}(x)\cdot r(y|x)||q(x)\cdot r(y))=q^{(t)}(x)\cdot r(y|x),
      \end{eqnarray*}
      we see that (\ref{Statement1}) is equivalent to the following:
      \begin{eqnarray*}
\exists \lambda ' ~~s.t. ~~\left.\frac{\partial L}{\partial \hat{q}(i)}\right|_{\hat{q}=q^{(t)},\lambda=\lambda'}=0,~~\left.\frac{\partial L}{\partial \lambda}\right|_{\hat{q}=q^{(t)},\lambda=\lambda'}=0.
      \end{eqnarray*}
      Observing that
      \begin{eqnarray*}
      D(\hat{q}(x)\cdot r(y|x)||q(x)\cdot r(y))
      =D(\hat{q}(x)||q(x))+\sum_{x=1}^{m} \hat{q}(x)D(r(y|x)||r(y)),
      \end{eqnarray*}
      we can see that   
      \begin{eqnarray*}
      \frac{\partial L}{\partial \hat{q}(i)}
      =\log {\hat{q}(i)}-\log{q(i)}+D(r(y|i)||r(y))+1-\lambda,
\end{eqnarray*}
which concludes the proof.
\end{proof}

From Theorem \ref{Thm.Backward e-step}, we can deduce the following interesting  theorem.
      \begin{theo} \label{Theo. e-step is exp.}
      The subset $E^{(t)}$ of $E$ defined by
      \begin{eqnarray*}
     E^{(t)}:=\{q(x)\cdot r(y)~| ~\Pi^{(e)}(q(x)\cdot r(y))=q^{(t)}(x)\cdot r(y|x)~\}
     \end{eqnarray*}
 is $\nabla^{(e)}$-autoparallel.
      \end{theo}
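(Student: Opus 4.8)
The plan is to apply Theorem \ref{Thm.Backward e-step} to obtain an explicit parametrization of $E^{(t)}$ and then verify that the resulting distributions have exponential-family form. By Theorem \ref{Thm.Backward e-step}, a point $q(x)\cdot r(y)$ of $E$ lies in $E^{(t)}$ if and only if
\[q(x)=\frac{q^{(t)}(x)\exp\{D(r(y|x)||r(y))\}}{Z(r)},\qquad Z(r):=\sum_{x}q^{(t)}(x)\exp\{D(r(y|x)||r(y))\},\]
so that $E^{(t)}$ is exactly the image of the map $\mathcal{S}_2\ni r\mapsto q\cdot r\in\mathcal{S}_3$ with $q$ given above. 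Thus $r(y)\in\mathcal{S}_2$ serves as the free parameter, and it remains to exhibit natural coordinates in which these distributions form an affine (e-flat) family.

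First I would take logarithms of $p(x,y)=q(x)\cdot r(y)$ and expand the divergence as $D(r(y|x)||r(y))=\sum_{y'}r(y'|x)\log r(y'|x)-\sum_{y'}r(y'|x)\log r(y')$. Collecting the terms that do not involve the parameter $r$ into a function $k(x):=\log q^{(t)}(x)+\sum_{y'}r(y'|x)\log r(y'|x)$ of $x$ alone, and writing $\theta_{y'}:=\log r(y')$, a direct computation gives
\[\log p(x,y)=k(x)+\sum_{y'\in\Omega_2}\theta_{y'}\bigl(\delta_{y,y'}-r(y'|x)\bigr)-\log Z(r).\]
This already has the shape of an exponential family: the sufficient statistics are $F_{y'}(x,y):=\delta_{y,y'}-r(y'|x)$, the natural parameters are $\theta_{y'}=\log r(y')$, and the log-partition function is $\psi=\log Z(r)$. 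A short check confirms that $\exp\{k(x)+\sum_{y'}\theta_{y'}F_{y'}(x,y)\}$ indeed sums to $Z(r)$ over $(x,y)$, so $\psi$ is the correct normalizer.

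The one subtlety — and the step I expect to require the most care — is that the coordinates $\theta_{y'}=\log r(y')$ are not free, since $r$ must satisfy $\sum_{y'}r(y')=1$. The resolution is the observation that the statistics satisfy $\sum_{y'\in\Omega_2}F_{y'}(x,y)=\sum_{y'}\delta_{y,y'}-\sum_{y'}r(y'|x)=1-1=0$ identically. Consequently the expression $\sum_{y'}\theta_{y'}F_{y'}(x,y)$ is unchanged when every $\theta_{y'}$ is shifted by a common constant, so I may fix a reference symbol $y_0\in\Omega_2$, set $\theta_{y_0}=0$, and use the genuinely free coordinates $\eta_{y'}:=\log\{r(y')/r(y_0)\}$ for $y'\neq y_0$. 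In these coordinates
\[\log p(x,y)=k(x)+\sum_{y'\neq y_0}\eta_{y'}F_{y'}(x,y)-\psi(\eta),\]
and since $r\mapsto(\eta_{y'})_{y'\neq y_0}$ is a bijection from $\mathcal{S}_2$ onto $\mathbb{R}^{|\Omega_2|-1}$, the parameters $\eta_{y'}$ range over all of $\mathbb{R}^{|\Omega_2|-1}$ as $q\cdot r$ ranges over $E^{(t)}$. This displays $E^{(t)}$ as a full $(|\Omega_2|-1)$-dimensional exponential family, completing the argument.
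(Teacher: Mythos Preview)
Your argument is correct, but it proceeds quite differently from the paper. The paper verifies directly that $E^{(t)}$ is closed under $e$-geodesics: given $q_1\cdot r_1,\,q_2\cdot r_2\in E^{(t)}$ and $t\in[0,1]$, it computes $t\log(q_1 r_1)+(1-t)\log(q_2 r_2)$, uses the characterization (\ref{Statement2}) for both endpoints, shows that the $r$-part yields some $r_3\in\mathcal{S}_2$ via the $e$-convexity of $\mathcal{S}_2$, and then checks that the $q$-part equals (up to an additive constant) $\log q^{(t)}+D(r(\cdot|x)\|r_3)$, so that the mixture point again satisfies (\ref{Statement2}) and hence lies in $E^{(t)}$. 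You instead parametrize $E^{(t)}$ by $r\in\mathcal{S}_2$ via Theorem~\ref{Thm.Backward e-step} and expand $\log p(x,y)$ to exhibit explicit sufficient statistics $F_{y'}(x,y)=\delta_{y,y'}-r(y'|x)$ and natural parameters $\eta_{y'}=\log\{r(y')/r(y_0)\}$. Your route is more constructive: it identifies the natural coordinates, the carrier measure $k(x)$, and the dimension $|\Omega_2|-1$ of the family, none of which is visible in the paper's proof. The paper's route, on the other hand, stays closer to the geometric definition of $\nabla^{(e)}$-autoparallelism and avoids having to handle the normalization constraint on $r$ explicitly.
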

      \begin{proof}
      It suffices to prove that, for any $q_1(x)\cdot r_1(y)$ and $q_2(x)\cdot r_2(y)$ contained in $E^{(t)}$ and any $t$ with $0\leq t\leq 1$, there exists $q_3(x)\cdot r_3(y)$ contained in $E^{(t)}$ satisfying
      \begin{eqnarray}
      t\log (q_1(x)\cdot r_1(y)) +(1-t)\log (q_2(x)\cdot r_2(y))=\log (q_3(x)\cdot r_3(y)) - \Phi_{1} (t) \label{Eq.Thm.Backward e-step1},
      \end{eqnarray}
      where the normalization term $\Phi_{1} (t)$ is defined by
      \begin{eqnarray*}
     \Phi_{1} (t) := \log \left\{ \left\{ \sum_{(x,y) } \exp \left\{ t \log(q_1 (x) \cdot r_{1} (y)) + (1-t) \log(q_2 (x) \cdot r_2 (y) )\right\} \right\}^{-1} \right\}.
      \end{eqnarray*}
      
      Calculating the left-hand side (LHS) of (\ref{Eq.Thm.Backward e-step1}), we obtain
      \begin{center}
       (LHS) $= t\log q_1(x) +(1-t)\log q_2(x)+t\log r_1 (y) +(1-t)\log r_2 (y)$.
      \end{center}
      Let us calculate $t\log q_1 (x) +(1-t)\log q_2(x)$. Noting that the pairs ($q_1(x), r_1(y)$) and ($q_2(x), r_2(y)$) satisfy (\ref{Statement2}), 
      \begin{equation*}
      \begin{split}
    t\log q_1(x) +(1-t)\log q_2(x) &= t\{ \log q^{(t)}(x) +\log \exp D (r(y||x)||r_1(y))+  \Phi_2(t) \}\\   
      &\quad + (1-t)\{ \log q^{(t)} (x) +\log \exp D(r(y||x)||r_2(y))+ \Phi_3(t) \}\\ 
     &= \log q^{(t)}(x)+t(D(r(y||x)||r_1(y)))\\
     &\quad +(1-t) (D(r(y||x)||r_2(y)))+t \Phi_2(t)+ (1-t) \Phi_3(t),
     \end{split}
     \end{equation*}
     where the normalization factors $\Phi_2 (t)$ and $\Phi_3 (t)$ are defined by 
     \begin{eqnarray*}
     \Phi_2(t):= \sum_{x} q^{(t)}(x)\exp D(r(y|x)||r_1(y)),\\
     \Phi_3(t):= \sum_{x} q^{(t)}(x)\exp D(r(y|x)||r_2(y)).
     \end{eqnarray*}   
      Define $r_3 (y)\in \mathcal{S}_2$ by
      \begin{eqnarray}
      t\log r_1 (y) +(1-t)\log r_2 (y) + \Phi_{4} (t)=:\log r_3 (y),\label{Eq.Thm.Backward e-step2}
      \end{eqnarray}
      where 
      \begin{equation*}
      \Phi_{4} (t) := \log \left\{ \left\{ \sum_{y } \exp \left\{ t \log r_{1} (y) + (1-t) \log r_2(y) \right\} \right\}^{-1} \right\}.
      \end{equation*}
      Then we can see that $t(D(r(y||x)||r_1(y)))+(1-t) (D(r(y||x)||r_2(y)))$ can be rewritten as follows by using $r_3(y)$ defined by (\ref{Eq.Thm.Backward e-step2}):
      \begin{equation*}
      \begin{split}
       \sum_{y\in \Omega_2} \{t\cdot r(y||x)\log \frac{r(y|x)}{r_1(y)}&+(1-t)\cdot r(y|x)\log \frac{r(y|x)}{r_2(y)}\}\\
      &=\sum_{y\in \Omega_2} \{r(y|x)\cdot \log r(y|x)-(t\log r_1(y) +(1-t)\log r_2(y))\}\\
       &=\sum_{y\in \Omega_2} \{r(y|x)\cdot \log r(y|x)-\log r_3(y) +  \Phi_{4} (t) \}\\
       &=D(r(y|x)||r_3 (y)) +  \Phi_{4} (t).
      \end{split}
     \end{equation*}
      Set $q_3(x) \in \mathcal{S}_1$ by
      \begin{eqnarray*}
      q_3 (x) := q^{(t)}(x)\exp {D(r(y|x)||r_3 (y))} \cdot \Phi_{5} (t),
      \end{eqnarray*}
      where $\Phi_{5} (t) := \left\{ \sum_{x \in \mathcal{S}_1} q^{(t)}(x)\exp {D(r(y|x)||r_3 (y))} \right\}^{-1}$.
      Then, we obtain
      \begin{eqnarray*}
       t\log q_1  (x)+(1-t)\log q_2(x) =\log q_3 (x) + \Phi_4 (t) + t \Phi_2 (t) + (1-t) \Phi_3 (t) - \log \Phi_5 (t).
      \end{eqnarray*}
     Hence
     \begin{equation}
     \mbox{(LHS)}=\log (q_3 (x) \cdot r_3 (y))+ t \Phi_2 (t) + (1-t) \Phi_3 (t) - \log \Phi_5 (t)\label{Eq.Thm.Backward e-step3}
     \end{equation}
     holds, and therefore it concludes the proof.
      \end{proof}
      
      Theorem \ref{Thm.Backward e-step} and Theorem \ref{Theo. e-step is exp.} tell us that, for any probability distribution $p^{(t)} (x,y) \in M $, we can carry out the Backward e-step and the set $E^{(t)}$ of candidates $q_{(t+1)}(x) \cdot r_{(t+1)} (y) $ for the Backward e-step is an exponential family.
            
      Next, let us consider whether or not we can carry out the Backward m-step. Which element should we choose in $E^{(t)}$ to carry out the Backward m-step? That is, what are conditions of $q_{(t+1)}(x) \cdot r_{(t+1)} (y) \in E^{(t)}$ that there exists $p^{(t+1)}(x,y) \in M$ such that $\Pi^{(m)} (p^{(t+1)}(x,y) ) = q_{(t+1)}(x) \cdot r_{(t+1)} (y)$ holds?

      To investigate this question, let $\Pi^{(m)} (M) $ be the embedding  of $M$ into $E$ by $\nabla ^{(m)}$-projection. Assume that there exist any intersections of $\Pi^{(m)} (M)$ with $E^{(t)}$ (its existence and uniqueness is discussed in Section \ref{section.Concluding Remarks}). Let $\hat{q}_{(t+1)} (x) \cdot \hat{r}_{(t+1)} (y) \in \Pi^{(m)} (M) \bigcap E^{(t)}$. Then, for $\hat{q}_{(t+1)} (x) \cdot \hat{r}_{(t+1)} (y)$, we can carry out the Backward m-step. Conversely, assume that, for $\hat{q}_{(t+1)} (x) \cdot  \hat{r}_{(t+1)} (y) $, we can carry out the Backward m-step. Then, $\hat{q}_{(t+1)} (x) \cdot \hat{r}_{(t+1)} (y) \in \Pi^{(m)} (M) \bigcap E^{(t)}$.  Hence, the problem of searching $q_{(t+1)} (x) \cdot r_{(t+1)} (y) \in E^{(t)}$ where the Backward m-step can be carried out is equivalent to the one  of searching any intersections of $\Pi^{(m)} (M)$ with $E^{(t)} $.
      
      The element $q_{(t+1)} (x) \cdot r_{(t+1)} (y) \in E^{(t)}$ is only depend on $r_{(t+1)} (y)$ because, for a  given $r_{(t+1)} (y)$, the requirement that $q_{(t+1)} (x) \cdot  r_{(t+1)} (y)  \in E^{(t)} $ determine $q_{(t+1)} (x)$ by the equation (\ref{Statement2}).  Therefore, from now on, we may see $q_{(t+1)} (x)$ as as a function of $r_{(t+1)} (y)$ determined by the requirement that $q_{(t+1)} (x) \cdot r_{(t+1)} (y) \in E^{(t)}$ ($i.e.$, the equation (\ref{Statement2})). Taking it into consideration, we may consider the condition of $r_{(t+1)} (y)$ such that 
     \begin{eqnarray}
     \exists q^{(t+1)}(x) \in \mathcal{S}_1 ~s.t.~ \Pi^{(m)} (q^{(t+1)} (x) \cdot r(y|x) ) = q_{(t+1)} (x) \cdot r_{(t+1)} (y) \label{requirement of Backward m-step}
     \end{eqnarray}
     holds.
     Noting that $\Pi^{(m)} (q^{(t+1)} (x) \cdot r(y|x) ) = q^{(t+1)} (x) \cdot r_{q^{(t+1)} }(y)$ (See Theorem \ref{Lem.proj.to E}), where $r_{q^{(t+1)}}(y)$ denotes the marginal distribution of $q_{(t+1)}(x)\cdot r(y|x)$, the condition (\ref{requirement of Backward m-step}) of  $r_{(t+1)} (y)$ is equivalent to the following condition:
     \begin{equation}
\exists  q_{(t+1)} (x)  \in \mathcal{S}_1 ~ s.t.  ~ \left \{
\begin{array}{l}
q^{(t+1)} (x) = q_{(t+1)} (x)\\
r_{(t+1)} (y) = r_{q^{(t+1)}} (y).
\end{array}
\right.
\end{equation}
     The above condition comes down to solving the following nonlinear equation with respect to  $r_{(t+1)}(y)$:
     \begin{eqnarray} \label{Eq.Backward m-step}
     r_{(t+1)} (y) = r_{q_{(t+1)}}  (y) .
     \end{eqnarray}

            Rewritting this as 
   \begin{eqnarray*}
      \Phi(t,r_{(t+1)})^{-1}\sum_{x\in \Omega_1} \left\{ q^{(t)}(x) \cdot \left\{ \exp \sum_{y\in \Omega_2}r(y|x)\log \frac{r(y|x)}{r_{(t+1)} (y)}\right\} \cdot r(y|x)\right\}=r_{(t+1)} (y)
\end{eqnarray*}
      \begin{eqnarray*}
       \left(\Phi(t,r_{(t+1)} ):=\sum_{x\in \Omega_1} q^{(t)}(x)\exp D\left(r(y|x)||r_{(t+1)} (y)\right) \right),
\end{eqnarray*}
we see that it is difficult to solve the nonlinear equation (\ref{Eq.Backward m-step}) with respect to $r_{(t+1)} (y)$. If we can solve the equation (\ref{Eq.Backward m-step}), we can prove that $I(p^{(t)}(x,y))$ converges to 
the channel capacity $C$. The proof is given in Section \ref{Convergence}.
      
As it is difficult to solve the equation (\ref{Eq.Backward m-step}) with respect to $r_{(t+1)}(y)$, we try to approximate (\ref{Eq.Backward m-step}) in order that we can solve. It will be a good solution to approximate      
      $\exp (D(r(y|x)||r_{(t+1)} (y))$ to some value that is independent of $x$ since it becomes a constant value. It seems good to approximate $r(y|x)$ of $\exp (D(r(y|x)||r_{(t+1)} (y))$ to the ``circumcenter'' $r^{*} (y) $ of the figure induced from $\{r(y|x)\}_{x=1}^{| \Omega_1 |}$ in $\mathcal{S}_2$, that is, the probability distribution contained in $\mathcal{S}_2$ that attains the channel capacity (see Theorem \ref{Thm.geom}). Then, observing that $\exp (D(r(y|x)||r_{(t+1)} (y))$ becomes independent of $x$, (14) is rewtitten as
      
      \begin{eqnarray}
      \sum_{x\in \Omega_1} (q^{(t)}(x)\cdot r(y|x))=r_{(t+1)}(y)\label{Eq.Revised.Backward m-step},
      \end{eqnarray}      
      that can be solved.  The merit of this approximation is that $ r^{*}(y) $ has dissapearred in the equation (\ref{Eq.Revised.Backward m-step}). Namely, even if we do not know the value of $r^{*}(y)$, we can solve (\ref{Eq.Revised.Backward m-step}). Since the solution of (\ref{Eq.Revised.Backward m-step}) is $r_{(t+1)} (y)=r_{q^{(t)}}(y)$, the approximation designates the element of $E^{(t)}$ by $q_{(t+1)}(x) \cdot r_{q^{(t)}} (y)$ where $q_{(t+1)}(x) \in \mathcal{S}_1$ is defined by
      \begin{eqnarray*}
      q_{(t+1)}(x)\propto q^{(t)}(x)\exp {D(r(y|x)||r_{q^{(t)}}(y))}.
      \end{eqnarray*}
      In the present paper, we call the approximation the approximate Backward e-step.
      
      By the above approximation, we can solve the equation (\ref{Eq.Backward m-step}). But, in return for the approximation, $q_{(t+1)}(x) \cdot r_{q^{(t)}} (y)$ is not necessarily an intersection of $\Pi^{(m)} (M) $ with $E^{(t)}$ , and therefore we need to approximate the Backward m-step too. In the present paper, we approximate the Backward m-step by the $\nabla^{(m)}$-projection of $q_{(t+1)}(x) \cdot r_{q^{(t)}} (y)$ onto $M$. A short computation shows that 
\begin{eqnarray}  \label{eq.Lem.proj. to M}   
      \Pi^{(m)} \left( q_{(t+1)}(x) \cdot r_{q^{(t)}} (y) \right)=  q_{(t+1)}(x) \cdot r(y |x). 
\end{eqnarray}    
      The proof is given in Section \ref{Pr.Lem.proj. to M}.  We call this approximation the approximate Backward m-step.
      
      Combining the approximate Backward m- and e-steps, $q^{(t)} (x) \cdot r(y|x)$ is updated by $q_{(t+1)} (x)\cdot r(y|x) $, and therefore, $q^{(t)}(x) $ is updated by $q_{(t+1)} (x) $, which is nothing but Arimoto algorithm (Fig. \ref{fig:Arimoto}).   
      
 \begin{figure}%----------------------------------------------------------------------------------------------------------------------------------------------------       

\centering       
       
 \begin{tikzpicture}

\path[name path=border1] (0,0) to[out=-10,in=150] (6,-2);
\path[name path=border2] (12,1) to[out=150,in=-10] (5.5,3.2);
\path[name path=redline] (0,-0.4) -- (12,1.5);
\path[name intersections={of=border1 and redline,by={a}}];
\path[name intersections={of=border2 and redline,by={b}}];
\shade[left color=gray!10,right color=gray!80] 
  (0,0) to[out=-10,in=150] (6,-2) -- (12,1) to[out=150,in=-10] (5.5,3.7) -- cycle;

% we draw the curved black line on top
\draw (-0.3,3.5) to[out=40,in=170] 
  coordinate[pos=0.27] (aux1) 
  coordinate[pos=0.6] (aux2)
  coordinate[pos=0.9] (aux3) 
  coordinate[pos=0.94] (aux4)  
   coordinate[pos=0.96] (aux5)  
   coordinate[pos=0.7] (aux6)
   coordinate[pos=0.75] (aux7)(10,7.5);
\draw[name path=EE^(t), dashed] (a) .. controls (6,1.5) and (7,2) .. 
  coordinate[pos=0.2] (bux1)  
  coordinate[pos=0.695] (bux2)   (b);

\draw[name path=E^(t),thick] (10.4,0.2) .. controls (8,3) and (4,3) .. 
coordinate[pos=0.1] (eux1) (4.2,2.8);
\path[name intersections={of=E^(t) and EE^(t),by={c}}];

\foreach \coor in {1}
  \draw[dashed,->] (aux\coor)to[bend left] (3.5,3);

\draw[ultra thick,dotted, red, <-] (aux3)to[bend left] 
coordinate[pos=0.07] (cux1)     (c);
\node[red, font=\sffamily,rotate=-35] at (c) {X};

\draw[ultra thick,dotted, red, ->] (aux2)to[bend left] 
coordinate[pos=0.92] (dux1)     (c);

\draw[thick, red] (aux4)--(9.4,7.2);
\draw[thick, red] (9.4,7.2)--(cux1);

\draw[thick, red] (bux2)--(8,2.2);
\draw[thick, red] (8,2.2)--(dux1);

\draw[ultra thick,loosely dashed, blue, ->] (eux1)[bend left].. controls (11,3) and (11,6) .. 
coordinate[pos=0.87] (fux1) (aux3);
\draw[fill=blue] (eux1) circle (3pt);

\draw[ultra thick,loosely dashed, blue, ->] (eux1)[bend left].. controls (10,3) and (11,5) .. 
coordinate[pos=0.95] (gux1) (aux6);

\draw[thick, blue] (aux5)--(10,7);
\draw[thick, blue] (10,7)--(fux1);

\draw[thick, blue] (aux7)--(7.9,7.2);
\draw[thick, blue] (7.9,7.2)--(gux1);

\node[rotate=30] at (6.2,-1.5) {$E$};
\node[rotate=30] at (10.3,7.7) {$M$};
\node[rotate=20] at (2.5,4.5) {${\Pi}^{(m)}$};
\node[rotate=10] at (2.9,0) {${\Pi}^{(m)}(M)$};
\node[rotate=20] at (5,2.5) {$E^{(t)}$};

\node at ([yshift=-1cm]current bounding box.south) {
\setlength\tabcolsep{3pt}
\begin{tabular}{@{}cl@{\hspace{12pt}}cl@{}}
 
\tikz\draw[ultra thick,dotted,red] (0,0.5) --(1.5,0.5); & Backward em-algorithm
  & \tikz\draw[ultra thick,loosely dashed,blue] (0,1) --(1.5,1); & Arimoto algorithm
 \\ 
 
 \textcolor{red}{\sffamily  X} & An exact solution of the equation (\ref{Eq.Backward m-step}) & \tikz\draw[fill=blue] (0,0) circle (2pt); & An approximate solution of (\ref{Eq.Backward m-step})
\end{tabular}
};

\end{tikzpicture}

\caption{Information geometric view of Arimoto algorithm.}\label{fig:Arimoto}

\end{figure}
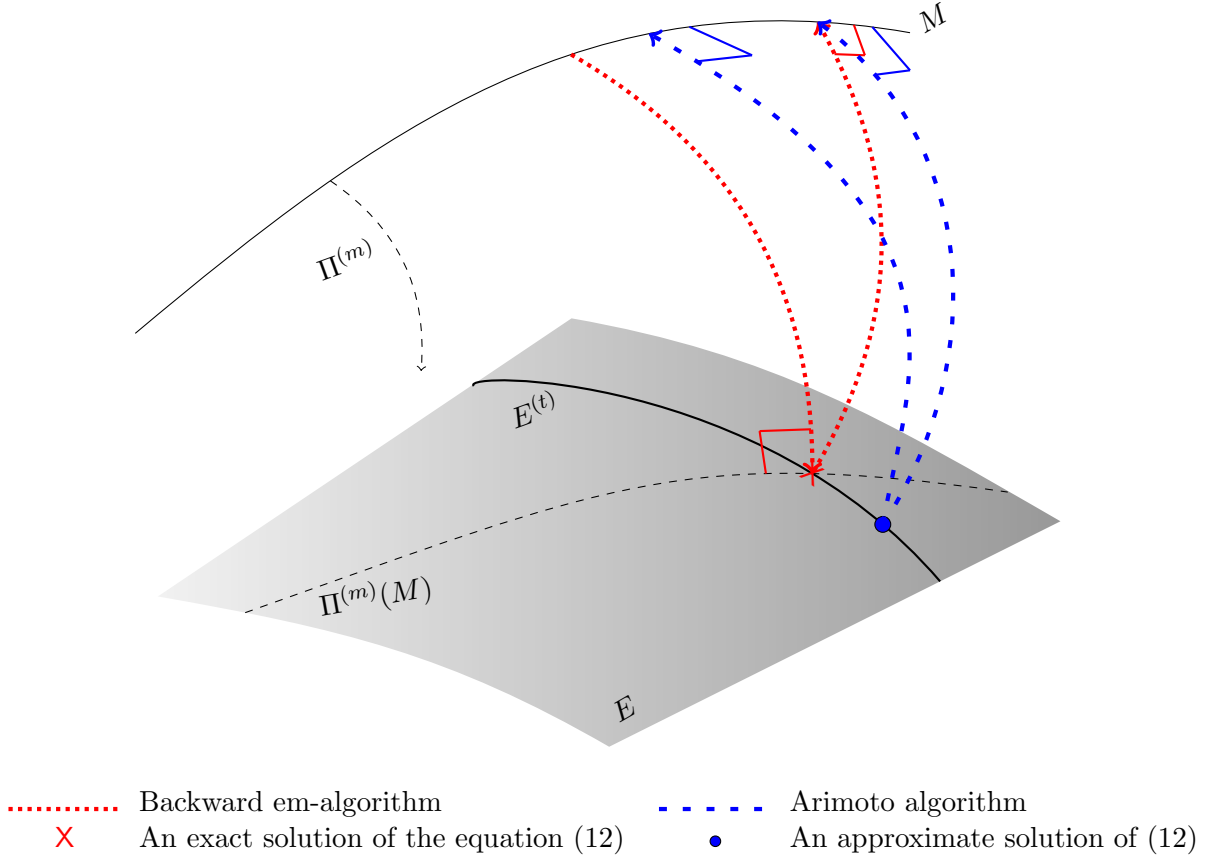%-------------------------------------------------------------------------------------------------------------------------------------------------------

     %-------------------------------------------------------------------------------------------------------------------------------------------------------------------    
     \section{\bf Concluding Remarks}\label{section.Concluding Remarks}
%-------------------------------------------------------------------------------------------------------------------------------------------------------------------                     
In the present paper, we investigated the channel capacity from the information geometric point of view in $\mathcal{S}_3$.  Then, we introduced the new algorithm that monotonically increases the Kullback-Leibler divergence, ``the Backward em-algorithm.'' The Backward e-step can be determined but the Backward m-step cannot be. Hence, we tried to approximate the Backward m-step, which corresponds to Arimoto algorithm.

There are many open problems left. First, existence and uniqueness of an intersection of $\Pi{(M)}$ with $E^{(t)}$ should be studied. To research the problem, we may consider the uniqueness and existence of a solution in the equation (\ref{Eq.Backward m-step}). If we can prove that there exists a solution of the equation (\ref{Eq.Backward m-step}), even if we cannot solve, we may be able to introduce other approximations of the Backward e- and m-steps and accelerate Arimoto algorithm.
 
It seems interesting to apply the Backward em-algorithm to other subjects. In our knowledge, there has been no algorithm that monotonically increases the Kullback-Leibler divergence. We can use the Backward em-algorithm when we want to increase the Kullback-Leibler divergence between two manifolds. For example, in the field of independent component analysis and machine learning, we often need to increase the mutual information (e.g., \cite{Lee,Petar}). In these situations, there is a possibility that the Backward em-algorithm  works well because information geometric view of the mutual information is the Kullback-Leibler divergence between two manifolds.

\section*{Acknowledgments}
The author gratefully acknowledges the continuous encouragement from Toru Ohira and Hideyuki Ishi.
The author also thanks Professors Amor Keziou, Hiroshi Matsuzoe, Masahito Hayashi, Shiro Ikeda and Phillippe Regnault  for their helpful discussions and comments.
 
%-------------------------------------------------------------------------------------------------------------------------------------------------------------------
%-------------------------------------------------------------------------------------------------------------------------------------------------------------------
\section{Appendices} 
%-------------------------------------------------------------------------------------------------------------------------------------------------------------------
%-------------------------------------------------------------------------------------------------------------------------------------------------------------------

%-------------------------------------------------------------------------------------------------------------------------------------------------------------------
\subsection{ Proof of Theorem \ref{Thm.geom}} \label{Pr.Thm.geom}
%-------------------------------------------------------------------------------------------------------------------------------------------------------------------

(First half): Define a function $L: {\mathbb{R}^{m}_{++}}\times \mathbb{R} \rightarrow \mathbb{R}$ by
\begin{eqnarray*}
     L(q_1,...,q_m,\lambda):=\sum_{i} q_i D(r(y|i)||r_{q}(y))+\lambda (\sum_i {q_i} -1),
\end{eqnarray*}
where $\lambda$ means a Lagrange multiplier. For the mutual information to take a maximum point at $\hat{q} \in {\mathcal{S}_1}$, it is necessary that 
\begin{eqnarray}
    \exists \hat{\lambda} ~~s.t. ~~\left.\frac{\partial L}{\partial \lambda}\right|_{\lambda=\hat{\lambda},q=\hat{q}}=0,~~~\left.\frac{\partial L}{\partial q_i}\right|_{\lambda=\hat{\lambda},q=\hat{q}}=0 \label{Eq.Lagrange}.
\end{eqnarray}

Since ${\partial L}/{\partial q_i}=D(r(y|i)||r_q (y))-1+\lambda$, (\ref{Eq.Lagrange}) is rewritten as 
\begin{eqnarray*}
\sum_i \hat{q}_i -1=0,~~D(r(y|i)||r_{\hat{q}} (y))=(1-\hat{\lambda})~(\forall i),
\end{eqnarray*}
and it follows immediately that $1-\hat{\lambda}$ corresponds to the channel capacity $C$ and the relation (\ref{Eq.geom1})  holds.\\
\noindent
(Second half): The minmax redundancy, defined by
\begin{eqnarray*}
\min_{q\in \overline{\mathcal{S}_1}} \max_{x\in \Omega_1} D(r(y|x)||r_q (y)),
\end{eqnarray*}
coincides with the channel capacity \cite[Theorem13.1.1]{Cover}, where $r_q (y)$ is the marginal distribution of $q(x)\cdot r(y|x)$. Since, for $\hat{q}$ $\in \mathcal{S}_1$ satisfying (\ref{Eq.geom3}), the equality $\max_{x\in \Omega_1} D(r(y|x)||r_{\hat{q}} (y))=\hat{C}$ holds, it follows that $C \leq \hat{C} $. Noting that the $\hat{q}$ satisfies $I(\hat{q} (x) \cdot r(y|x))=\hat{C}$ and taking the definition of the channel capacity into the consideration, it also follows that $C \geq \hat{C}$, and therefore, $C=\hat{C}$.

%-------------------------------------------------------------------------------------------------------------------------------------------------------------------
\subsection{  Proof of Lemma \ref{Lem.proj.to E}} \label{Pr.Lem.proj.to E}
%-------------------------------------------------------------------------------------------------------------------------------------------------------------------

Take any $\hat q \cdot \hat r$ contained in $E$. Then 
\begin{eqnarray*}
D(p(x,y)||\hat q (x)\cdot \hat r(y))&-&D(p(x,y) ||q (x)\cdot r(y))
\\ &=&\sum_{x,y}( p(x,y)\log {\frac{p(x,y)}{\hat q (x) \cdot \hat r (y) }}-p(x,y)\log {\frac{p(x,y)}{q (x)  \cdot r (y) }})
\\ &=&\sum_{x,y} \left(-p(x,y)\log \left( {\hat q (x)\cdot \hat r (y) } \right) +p(x,y)\log \left( {q (x)\cdot r(y)} \right) \right)
\\ &=&\sum_{x,y} p(x,y)(\log {\frac{q(x)}{\hat q(x)}}+\log {\frac{r(y)}{\hat r(y)}})
\\ &=&D(q(x)||\hat q(x))+D(r(y)||\hat r(y))
\\ &\geq& 0
\end{eqnarray*}
holds and the lower bound $0$ is attained if and only if $\hat q (x)\cdot \hat{r} (y)=q(x)\cdot r(y)$ (since $D(p_1||p_2) \Leftrightarrow p_1=p_2$ holds \cite[p.31]{Cover}). Observing that the ${\nabla}^{(m)}$-projection $\Pi^{(m)}\left( p(x,y) \right)$ onto the ${\nabla}^{(e)}$-autoparallel submanifold $E$ is characterized by
\begin{eqnarray*}
\Pi^{(m)} \left( p(x,y) \right)=\argmin_{\hat q (x) \cdot \hat r (y) \in E} D(p(x,y) ||\hat q (x) \cdot \hat r (y) ),
\end{eqnarray*}
it concludes the proof. 

%-------------------------------------------------------------------------------------------------------------------------------------------------------------------
\subsection{  Proof of the equation (\ref{eq.Lem.proj. to M}) }      \label{Pr.Lem.proj. to M}
%-------------------------------------------------------------------------------------------------------------------------------------------------------------------

It suffices to prove the following lemma.
\begin{lemm}
Let $q(x) \cdot r(y)  \in E$. Then  $q(x) \cdot r(y|x) $ is one of the candidates for $\nabla^{(m)} $- projections of $q(x) \cdot r(y)$ onto $M$.
\end{lemm}
Note that since $M$ is not $\nabla^{(e)}$-autoparallel, a $\nabla^{(m)} $- projection onto $M$ is not necessarily unique.

\begin{proof}
Take any $\hat q (x) \cdot r(y|x) $ contained in $M$. Then 
\begin{eqnarray*}
D(q(x) \cdot r(y) ||\hat q (x)\cdot \hat r(y|x) )&-&D(q(x) \cdot r(y)  ||q(x) \cdot r(y|x) )
\\&=&\sum_{x,y}q(x)\cdot r(y) \left( \log {\frac{q(x) \cdot r(y)}{\hat q (x) \cdot r(y|x)}}-\log {\frac{q(x) \cdot r(y)}{q (x) \cdot r(y|x)}}) \right)
\\ &=&\sum_{x,y} q(x)\cdot r(y) \left( \log {\frac{q(x) \cdot r(y|x)}{\hat q (x) \cdot r(y|x)}} \right)
\\ &=&D(q(x)||\hat q(x))
\\ &\geq& 0
\end{eqnarray*}
holds and the lower bound $0$ is attained if and only if $\hat q(x) =q(x) $. Observing that, if
\begin{eqnarray*}
p(x, y) =\argmin_{\hat p (x,y) \in M } D(q(x) \cdot r(y) || \hat p (x,y))
\end{eqnarray*}
holds, $p(x,y)$ is one of the candidates for $\nabla^{(m)} $- projections of $q(x) \cdot r(y)$ onto $M$\cite[Theorem 3.10]{S. Amari and H. Nagaoka},
it concludes the proof. 
\end{proof}

%-------------------------------------------------------------------------------------------------------------------------------------------------------------------
\subsection{  Convergence of the Backward em-algorithm} \label{Convergence}
%-------------------------------------------------------------------------------------------------------------------------------------------------------------------

In this section, we assume that the equation (\ref{Eq.Backward m-step}) can be solved and that $p^{(t)}(x,y)$ can be updated to $p^{(t+1)}(x,y)$ any number of times by the Backward em-algorithm. 
\begin{theo} \label{Thm.convergence}
$I(p^{(t)})$ converges to the channel capacity $C$ as $p^{(t)}$ is updated.
\end{theo}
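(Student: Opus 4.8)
The plan is to combine the monotonicity already proved for the Backward em-algorithm with a compactness argument, and then to identify any accumulation point of the iterates as a distribution satisfying the capacity condition (\ref{Eq.geom3}), so that the converse half of Theorem \ref{Thm.geom} pins the limiting value to $C$. First I would record the two elementary facts that get convergence off the ground. By (\ref{geom. channnel}) we have $I(p^{(t)}) = D(p^{(t)}||\Pi^{(m)}p^{(t)}) \le C$ for every $t$, and by the monotonicity theorem $I(p^{(t)})$ is nondecreasing; hence $I(p^{(t)})$ converges to some $I^{*} \le C$, and it remains only to show $I^{*} = C$.

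The telescoping of the increase chain yields more. The two inequalities in the monotonicity proof give
$$I(p^{(t+1)}) - I(p^{(t)}) = D(\Pi^{(m)}p^{(t)}||q_{(t+1)}\cdot r_{(t+1)}) + D(p^{(t+1)}||p^{(t)}),$$
and summing over $t$ (the left side being bounded by $C - I(p^{(0)})$) forces both nonnegative terms to tend to $0$. Using Lemma \ref{Lem.proj.to E} to write $\Pi^{(m)}p^{(t)} = q^{(t)}\cdot r_{q^{(t)}}$ together with the m-step identity $q_{(t+1)}\cdot r_{(t+1)} = q^{(t+1)}\cdot r_{q^{(t+1)}}$, and the product structure $p^{(t)} = q^{(t)}\cdot r(y|x)$, these collapse to $D(q^{(t)}||q^{(t+1)}) \to 0$, $D(r_{q^{(t)}}||r_{q^{(t+1)}}) \to 0$ and $D(q^{(t+1)}||q^{(t)}) \to 0$; in particular consecutive iterates become arbitrarily close.

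Next I would exploit compactness of $\overline{\mathcal{S}_1}$ to extract a subsequence $q^{(t_k)} \to \bar q$. The iterates stay in $\mathcal{S}_1$ because each factor $\exp D(r(y|x)||r_{q}(y)) \ge 1$ is finite whenever $q$ is strictly positive (then $r_q(y) \ge q(x)r(y|x) > 0$ on the support of $r(\cdot|x)$), so the update $q \mapsto F(q)$ with $F(q)(x) \propto q(x)\exp D(r(y|x)||r_{q}(y))$ is well defined along the sequence. Combining $D(q^{(t_k+1)}||q^{(t_k)}) \to 0$ with Pinsker's inequality gives $q^{(t_k+1)} \to \bar q$ as well, so $\bar q$ is a fixed point of $F$ in the limit. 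The fixed-point relation $\bar q(x) = \bar q(x)\exp D(r(y|x)||r_{\bar q}(y))/\bar Z$, with $\bar Z = \sum_x \bar q(x)\exp D(r(y|x)||r_{\bar q}(y))$, then forces $D(r(y|x)||r_{\bar q}(y)) = \log \bar Z =: \hat C \ge 0$ at every $x$ with $\bar q(x) > 0$; when $\bar q$ is interior this is exactly (\ref{Eq.geom3}), and the converse half of Theorem \ref{Thm.geom} identifies $\hat C = C$ and $\bar q$ as a capacity achiever, whence $I(\bar q\cdot r(y|x)) = C$. Since $I$ is continuous on $\overline{\mathcal{S}_1}$ under the convention $0\log(0/0):=0$ already adopted in the paper, passing to the limit along the subsequence gives $I^{*} = I(\bar q\cdot r(y|x)) = C$, and because the full sequence is monotone this upgrades to $I(p^{(t)}) \to C$.

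The main obstacle I expect is behaviour at the boundary of $\overline{\mathcal{S}_1}$. If an accumulation point $\bar q$ has vanishing coordinates, the fixed-point relation only yields the Karush--Kuhn--Tucker form $D(r(y|x)||r_{\bar q}(y)) = \hat C$ on the support and $\le \hat C$ off it, and the factor $\exp D(r(y|x)||r(y))$ may fail to be continuous there, so the step $q^{(t_k+1)} \to F(\bar q) = \bar q$ cannot be justified by naive continuity. I would resolve this either by arguing that the monotone increase keeps the iterates uniformly away from the offending faces, or by replacing the fixed-point step with a direct contradiction: were $I^{*} < C$, then $\bar q$ is not capacity-achieving, so $D(r(y|x)||r_{\bar q}(y))$ is nonconstant in $x$ and the increase formula above stays bounded below by a positive constant along the subsequence, contradicting summability of the increments. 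Securing that lower bound uniformly as the iterates approach the boundary is the delicate point of the whole argument.
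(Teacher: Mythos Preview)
Your route is genuinely different from the paper's. The paper never invokes compactness or fixed points; it runs the classical Arimoto--Csisz\'ar telescoping argument. After a short lemma showing $D(p^{(t)}||q\cdot r)=\log\Phi(t,r)$ for every $q\cdot r\in E^{(t)}$, it fixes a capacity-achieving input $q^{(0)}$ and computes directly that
\[
\sum_x q^{(0)}(x)\log\frac{q^{(t+1)}(x)}{q^{(t)}(x)} \;=\; -D\bigl(p^{(t)}||q_{(t+1)}r_{(t+1)}\bigr) + C + D\bigl(r_{q^{(0)}}||r_{(t+1)}\bigr) \;\ge\; C-D\bigl(p^{(t)}||q_{(t+1)}r_{(t+1)}\bigr).
\]
Summing in $t$ telescopes the left side to at most $D(q^{(0)}||q^{(1)})<\infty$, so the nonnegative terms $C-D(p^{(t)}||q_{(t+1)}r_{(t+1)})$ tend to $0$; since this quantity is sandwiched between $C-I(p^{(t+1)})$ and $C-I(p^{(t)})$, one gets $I(p^{(t)})\to C$. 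This is short, quantitative (it even yields $\sum_t(C-I(p^{(t)}))<\infty$), and entirely sidesteps the boundary of $\overline{\mathcal{S}_1}$.

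Your compactness/fixed-point plan, by contrast, has two soft spots. First, the map $F(q)(x)\propto q(x)\exp D(r(y|x)||r_q(y))$ you write down is the \emph{Arimoto} update, not the Backward em update: here $r_{(t+1)}=r_{q^{(t+1)}}$, so the iteration is implicit, $q^{(t+1)}(x)\propto q^{(t)}(x)\exp D(r(y|x)||r_{q^{(t+1)}}(y))$, and in general $q^{(t+1)}\ne F(q^{(t)})$. The fixed-point \emph{equation} does coincide with $\bar q=F(\bar q)$, so your conclusion for interior accumulation points can be rescued by passing to the limit in the implicit relation rather than via continuity of $F$; but as written the step ``$\bar q$ is a fixed point of $F$ in the limit'' is not justified for the algorithm under discussion. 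Second, and more seriously, the boundary case is left open: neither of your proposed remedies (iterates uniformly away from faces, or a uniform positive lower bound on the increment near a non-optimal $\bar q$) is actually carried out, and you yourself call this ``the delicate point of the whole argument.'' The paper's telescoping proof avoids both issues entirely.
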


\begin{lemm}
Let $q\cdot r \in E^{(t)}$. Then,

\begin{eqnarray*}
D(p^{(t)}||q\cdot r)=\log \Phi(t,r),
\end{eqnarray*}

where $\Phi(t,r):=\sum_{x\in \Omega_1} q^{(t)}(x) \exp D(r(y|x)||r(y))$.
\end{lemm}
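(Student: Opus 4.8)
The plan is to combine the explicit description of membership in $E^{(t)}$ furnished by Theorem \ref{Thm.Backward e-step} with the additive decomposition of the Kullback-Leibler divergence that already appeared in the proof of that theorem. Since $q\cdot r \in E^{(t)}$ means precisely that the $\nabla^{(e)}$-projection of $q\cdot r$ onto $M$ equals $p^{(t)}=q^{(t)}(x)\cdot r(y|x)$, Theorem \ref{Thm.Backward e-step} tells me that $q(x)\propto q^{(t)}(x)\exp D(r(y|x)||r(y))$, so that the normalizing constant is exactly $\Phi(t,r)=\sum_{x\in\Omega_1} q^{(t)}(x)\exp D(r(y|x)||r(y))$. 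In other words, I may write $q(x)=q^{(t)}(x)\exp D(r(y|x)||r(y))/\Phi(t,r)$, and this closed form is the entire engine of the proof.

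First I would invoke the decomposition established earlier,
\begin{eqnarray*}
D(p^{(t)}||q\cdot r)=D(q^{(t)}(x)\cdot r(y|x)||q(x)\cdot r(y))=D(q^{(t)}||q)+\sum_{x} q^{(t)}(x)\,D(r(y|x)||r(y)),
\end{eqnarray*}
which reduces everything to evaluating $D(q^{(t)}||q)$. Substituting the closed form for $q(x)$ gives
\begin{eqnarray*}
\log\frac{q^{(t)}(x)}{q(x)}=\log\Phi(t,r)-D(r(y|x)||r(y)),
\end{eqnarray*}
so that, multiplying by $q^{(t)}(x)$ and summing over $x$ (and using $\sum_x q^{(t)}(x)=1$),
\begin{eqnarray*}
D(q^{(t)}||q)=\log\Phi(t,r)-\sum_{x} q^{(t)}(x)\,D(r(y|x)||r(y)).
\end{eqnarray*}

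Finally I would add the two pieces: the term $\sum_x q^{(t)}(x)\,D(r(y|x)||r(y))$ appearing in the decomposition cancels exactly the corresponding term in $D(q^{(t)}||q)$, leaving $D(p^{(t)}||q\cdot r)=\log\Phi(t,r)$, as claimed. There is no real obstacle here; the computation is routine once the closed form for $q$ from Theorem \ref{Thm.Backward e-step} is in hand. The only point deserving care is bookkeeping the cancellation and confirming that $\log\Phi(t,r)$ is genuinely independent of $x$ (it is, being a sum over $x$), so that pulling it out of the $x$-average is legitimate.
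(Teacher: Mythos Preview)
Your proof is correct and follows essentially the same approach as the paper: both substitute the closed form $q(x)=q^{(t)}(x)\exp D(r(y|x)||r(y))/\Phi(t,r)$ from Theorem \ref{Thm.Backward e-step} into $D(p^{(t)}||q\cdot r)$ and observe the cancellation; you simply factor the computation through the intermediate decomposition $D(q^{(t)}||q)+\sum_x q^{(t)}(x)D(r(y|x)||r(y))$, whereas the paper carries out the same cancellation in a single chain of equalities.
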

\begin{proof}
      \begin{equation*}
      \begin{split}
      D(p^{(t)}||q\cdot r)&=\sum_{x,y} p^{(t)}(x,y) \log \frac{q^{(t)}(x)\cdot r(y|x)}{q(x)\cdot r(y)} \\
      &=\sum_{x,y} p^{(t)}(x,y)\log \frac{q^{(t)}(x)\cdot r(y|x)}{r(y)q^{(t)}(x) \exp D(r(y|x)||r(y)) \Phi(t,r)^{-1}}\\
      &=\sum_{x,y}q^{(t)}(x)\cdot r(y|x) \{ \log \frac{r(y|x)}{r(y)} -D(r(y|x)||r(y))+\log{\Phi (t,r)} \}\\
      &= \sum_x q^{(t)}(x) \sum_y r(y|x) \log \frac{r(y|x)}{r(y)} -\sum_x q^{(t)}(x)D(r(y|x)||r(y)) +\log \Phi (t,r)\\
      &= \log \Phi (t,r).
      \end{split}  
      \end{equation*}
\end{proof}     

{\bf Proof of Theorem \ref{Thm.convergence}.}
 It suffices to prove that $D(p^{(t)}||p_{(t+1)})$ converges to the channel capacity $C$. Let $q^{(0)}(x) \in \overline{\mathcal{S}_1}$ be a probability distribution that attains the channel capacity $C$. First, let us prove that 
\begin{eqnarray}
C-D(p^{(t)}||p_{(t+1)}) \leq \sum_x q^{(0)}(x) \log \frac{q^{(t+1)}(x)}{q^{(t)}(x)}.\label{Eq.Convergence}
\end{eqnarray}
Calculating $\sum_x q^{(0)}(x) \log \{q^{(t+1)}(x)/q^{(t)}(x)\}$, we obtain
\begin{eqnarray*}
\sum_x q^{(0)}(x) \log \frac{q^{(t+1)}(x)}{q^{(t)}(x)} &=& \sum_x q^{(0)}(x) \log \frac{q^{(t)}(x) \exp D(r(y|x)||r_{(t+1)}(y))\Phi(t,r_{(t+1)})^{-1}}{q^{(t)}(x)}\\
                                                      &=& -\log \Phi(t,r_{(t+1)}) +\sum_x q^{(0)}(x)D(r(y|x)||r_{(t+1)}(y))\\                                                     
                                                      &=& -D(p^{(t)}||p_{(t+1)}) +\sum_{x,y} q^{(0)}(x)r(y|x) \log \frac{r(y|x)}{r_{(t+1)}(y)}\\
                                                      &=& -D(p^{(t)}||p_{(t+1)}) +\sum_{x,y} q^{(0)}(x)r(y|x) \{ \log \frac{r(y|x)}{r_{q^{(0)}}(y)} +\log \frac {r_{q^{(0)}}(y)}{r_{(t+1)}(y)}\}\\
                                                      &=& -D(p^{(t)}||p_{(t+1)}) +\sum_x q^{(0)}(x)D(r(y|x)||r_{q^{(0)}}(y)) +D(r_{q^{(0)}}(y)||r_{(t+1)}(y))\\
                                                      &=& -D(p^{(t)}||p_{(t+1)})+C+D(r_{q^{(0)}}(y)||r_{(t+1)}(y))\\
                                                      &\geq&  -D(p^{(t)}||p_{(t+1)})+C   ,                                                                                                
\end{eqnarray*}
and therefore, we obtain the inequality (\ref{Eq.Convergence}), where $r_{q^{(0)}}(y)$ denotes the marginal distribution of $q^{(0)}(x)\cdot r(y|x)$ on $\Omega_2$.
Summing up the both sides of the inequality (\ref{Eq.Convergence}), we have
\begin{eqnarray*}
\sum_{t=1}^{T} (C-D(p^{(t)}||p_{(t+1)})) &\leq& \sum_x q^{(0)}(x) \log \frac {q^{(T+1)}(x)}{q^{(1)}(x)}\\
                                         &\leq& \sum_x q^{(0)}(x) \log \frac {q^{(0)}(x)}{q^{(1)}(x)}=D(q^{(0)}(x)||q^{(1)}(x)).
\end{eqnarray*}
Noting that $0\leq  D(q^{(0)}(x)||q^{(1)}(x))< \infty$ and is independent of $t$, we can see that the sequence  $\{C-D(p^{(t)}||p_{(t+1)})\}_{t=1}^{\infty}$ converges $0$.
\qed

% Authors must disclose all relationships or interests that 
% could have direct or potential influence or impart bias on 
% the work: 
%
% \section*{Conflict of interest}
%
% The authors declare that they have no conflict of interest.

% BibTeX users please use one of
%\bibliographystyle{spbasic}      % basic style, author-year citations
%\bibliographystyle{spmpsci}      % mathematics and physical sciences
%\bibliographystyle{spphys}       % APS-like style for physics
%\bibliography{}   % name your BibTeX data base

% Non-BibTeX usersplese use

\end{document}